\documentclass[draftclsnofoot, onecolumn, 10pt, one-side]{IEEEtran}
\usepackage{times,amsmath,color,amssymb,graphicx,epsfig,cite,psfrag,enumerate,amsthm,multicol,subfigure}


\theoremstyle{remark}
\newtheorem{Remark}{Remark}
\theoremstyle{plain}
\newtheorem{Theorem}{Theorem}

\theoremstyle{definition}
\newtheorem{Definition}{Definition}

\allowdisplaybreaks

\hyphenation{op-tical net-works semi-conduc-tor}

\begin{document}
\title{Achievable Rate Regions for Discrete Memoryless Interference Channel with State Information}
\author{Lili~Zhang, 
        Jinhua~Jiang, 
        and~Shuguang~Cui
\thanks{L. Zhang and S. Cui are with the Department
of Electrical and Computer Engineering, Texas A$\&$M University,
College Station, TX 77843 USA
(e-mail: lily.zhang@tamu.edu; cui@ece.tamu.edu).}
\thanks{J. Jiang is with the Department of Electrical Engineering, Stanford University, Stanford, CA 94305 USA (email: jhjiang@stanford.edu).}
}
\maketitle
\begin{abstract}
In this paper, we study the state-dependent two-user interference channel, where the state information is non-causally known at both transmitters but unknown to either of the receivers. We propose two coding schemes for the discrete memoryless case: simultaneous encoding for the sub-messages in the first one and superposition encoding in the second one, both with rate splitting and Gel'fand-Pinsker coding. The corresponding achievable rate regions are established.
\end{abstract}
\IEEEpeerreviewmaketitle
\renewcommand{\baselinestretch}{1}
\section{Introduction} \label{sec_intro}
The interference channel (IC) models the situation where several independent transmitters communicate with their corresponding receivers over a common channel. Due to the shared medium, each receiver suffers from interferences caused by the transmissions of other transceiver pairs. The research of IC was initiated by Shannon~\cite{shannon} and the channel was first thoroughly studied by Ahlswede~\cite{ahlswede}. Later, Carleial~\cite{carleial} established an improved achievable rate region by applying the superposition coding scheme. In~\cite{han_kobayashi}, Han and Kobayashi obtained the best achievable rate region known to date for the general IC by utilizing simultaneous decoding at the receivers. Recently, this rate region has been re-characterized with superposition encoding for the sub-messages~\cite{cmg_region,new_fourier}. However, the capacity region of the general IC is still an open problem except for several special cases~\cite{han_kobayashi,carleial_2,sato}.

Many variations of the interference channel have also been studied, including the IC with feedback~\cite{jinhua} and the IC with conferencing encoders/decoders~\cite{caoyi}. In this paper, we study another variation of the IC: the state-dependent two-user IC with state information non-causally known at both transmitters. This situation may arise in a multi-cell downlink communication problem, where two interested cells are interfering with each other and the mobiles suffer from some common interference (which can be from other cells and viewed as state) non-causally known at both base-stations. Notably, communication over state-dependent channels has drawn lots of attentions due to its wide applications such as information embedding~\cite{info_embedding} and computer memories with defects~\cite{gamal}. The corresponding framework was also initiated by Shannon in~\cite{shannon_2}, which established the capacity of a state-dependent discrete memoryless (DM) point-to-point channel with causal state information at the transmitter. In~\cite{gelfand}, Gel'fand and Pinsker obtained the capacity for such a point-to-point case with the state information non-causally known at the transmitter. Subsequently, Costa~\cite{costa_dpc} extended Gel'fand-Pinsker coding to the state-dependent additive white Gaussian noise (AWGN) channel, where the state is an additive zero-mean Gaussian interference. This result is known as the dirty-paper coding technique, which achieves the capacity as if there is no such an interference. For the multi-user case, extensions of the afore-mentioned schemes were provided in~\cite{gelfand_2, kim, dpc_bc} for the multiple access channel, the broadcast channel, and the degraded Gaussian relay channel, respectively.

In this paper, we study the DM state-dependent IC with state information non-causally known at the transmitters and develop two coding schemes, both of which jointly apply rate splitting and Gel'fand-Pinsker coding. In the first coding scheme, we deploy simultaneous encoding for the sub-messages and in the second one, we deploy superposition encoding for the sub-messages. The associated achievable rate regions are derived based on the respective coding schemes. 

The rest of the paper is organized as follows. The channel model and the definition of achievable rate region are presented in Section \ref{sec_2}. In Section \ref{sec_3}, we provide two achievable rate regions based on the two different coding schemes, respectively. 
Finally, we conclude the paper in Section \ref{sec_5}.
\section{Channel Model} \label{sec_2}
Consider the interference channel as shown in Fig. 1, where two transmitters communicate with the corresponding receivers through a common channel dependent on state $S$. The transmitters do not cooperate with each other; however, they both know the state information $S$ non-causally, which is unknown to either of the receivers. Each receiver needs to decode the information from the respective transmitter.
\subsection{Notations}
We use the following notations throughout this paper. The random variable is defined as $X$ with value $x$ in a finite set $\mathcal{X}$. Let $p_{X}(x)$ be the probability mass function of $X$ on $\mathcal{X}$. The corresponding sequences are denoted by $x^n$ with length $n$.
\begin{figure}[!t]
\centering
\includegraphics[width=3.5in]{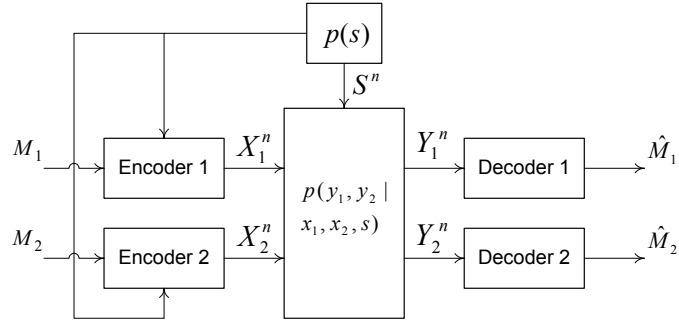}
\caption{The interference channel with state information non-causally known at both transmitters} \label{fig_relay_channel_model}
\end{figure}
\subsection{Discrete Memoryless Case}\label{sec_2_1}
The state-dependent two-user interference channel is defined by $(\mathcal{X}_1,\mathcal{X}_2,\mathcal{Y}_1,\mathcal{Y}_2,\mathcal{S},p(y_1,y_2|x_1,x_2,s))$, where $\mathcal{X}_1,\mathcal{X}_2$ are two input alphabet sets, $\mathcal{Y}_1,\mathcal{Y}_2$ are the corresponding output alphabet sets, $\mathcal{S}$ is the state alphabet set, and $p(y_1,y_2|x_1,x_2,s)$ is the conditional probability of $(y_1,y_2) \in \mathcal{Y}_1 \mathcal{\times} \mathcal{Y}_2$ given $(x_1,x_2,s)\in \mathcal{X}_1 \mathcal{\times}\mathcal{X}_2 \mathcal{\times}\mathcal{S}$. The channel is assumed to be memoryless, i.e.,
\[
p(y_1^n,y_2^n|x_1^n,x_2^n,s^n) = \prod_{i=1}^{n} p(y_{1i},y_{2i}|x_{1i},x_{2i},s_{i}),
\]
where $i$ is the element index for each sequence.

A $(2^{nR_1},2^{nR_2},n)$ code for the above channel consists of two independent message sets $\{1,2,\cdots,2^{nR_{1}}\}$ and $\{1,2,\cdots,2^{nR_{2}}\}$, two encoders that assign a codeword to each message $m_1 \in \{1,2,\cdots,2^{nR_{1}}\}$ and $m_2 \in \{1,2,\cdots,2^{nR_{2}}\}$ based on the non-causally known state information $s^n$, and two decoders that determine the estimated messages $\hat{m}_1$ and $\hat{m}_2$ or declare an error from the received sequences.

The average probability of error is defined as:
\begin{equation}
P_e^{(n)} = \frac{1}{2^{n(R_1+R_2)}}\sum_{m_1,m_2} \textrm{Pr}\{\hat{m}_1 \neq m_1 \textrm{ or } \hat{m}_2 \neq m_2|(m_1,m_2) \textrm{ is sent}\},
\end{equation}
where $(m_1,m_2)$ is assumed to be uniformly distributed in $\{1,2,\cdots,2^{nR_{1}}\}\times\{1,2,\cdots,2^{nR_{2}}\}$.

\begin{Definition}
A rate pair $(R_1,R_2)$ of non-negative real values is achievable if there exists a sequence of $(2^{nR_1},2^{nR_2},n)$ codes with $P_e^{(n)}\to 0$ as $n \to \infty$. The set of all achievable rate pairs is defined as the capacity region.
\end{Definition}
\section{Achievable Rate Regions for the DM Interference Channel with State Information} \label{sec_3}
In this section, we propose two new coding schemes for the DM interference channel with state information non-causally known at both transmitters and present the associated achievable rate regions. For both coding schemes, we jointly deploy rate splitting and Gel'fand-Pinsker coding. In the first coding scheme, we use simultaneous encoding on the sub-messages, while in the second one we apply superposition encoding.
\subsection{Simultaneous Encoding}\label{sec_3_1}
Now we introduce the following rate region achieved by the first coding scheme, which combines rate splitting and Gel'fand-Pinsker coding.
\begin{Theorem}\label{theorem_1}
For a fixed probability distribution $p(q)p(u_1|q,s)p(v_1|q,s)p(u_2|q,s)p(v_2|q,s)$, let $\mathcal{R}_1$ be the set of all non-negative rate tuple $(R_{10},R_{11},R_{20},R_{22})$ satisfying
{\small
\begin{eqnarray}
\label{eq_rate_constraint_11} R_{11} &\leq& I(U_1;U_2|Q)+I(U_1,U_2;V_1|Q)+I(V_1;Y_1|U_1,U_2,Q)-I(V_1;S|Q),\\
\label{eq_rate_constraint_12} R_{10} &\leq& I(U_1;U_2|Q)+I(U_1,U_2;V_1|Q)+I(U_1;Y_1|V_1,U_2,Q)-I(U_1;S|Q),\\
\label{eq_rate_constraint_13} R_{10}+R_{11} &\leq& I(U_1;U_2|Q)+I(U_1,U_2;V_1|Q)+I(U_1,V_1;Y_1|U_2,Q)-I(U_1;S|Q)-I(V_1;S|Q),\\
\label{eq_rate_constraint_14} R_{11}+R_{20} &\leq& I(U_1;U_2|Q)+I(U_1,U_2;V_1|Q)+I(V_1,U_2;Y_1|U_1,Q)-I(V_1;S|Q)-I(U_2;S|Q),\\
\label{eq_rate_constraint_15} R_{10}+R_{20} &\leq& I(U_1;U_2|Q)+I(U_1,U_2;V_1|Q)+I(U_1,U_2;Y_1|V_1,Q)-I(U_1;S|Q)-I(U_2;S|Q),\\
\label{eq_rate_constraint_16} R_{10}+R_{11}+R_{20} &\leq& I(U_1;U_2|Q)+I(U_1,U_2;V_1|Q)+I(U_1,V_1,U_2;Y_1|Q)-I(U_1;S|Q)-I(V_1;S|Q)-I(U_2;S|Q),\\
\label{eq_rate_constraint_21} R_{22} &\leq& I(U_2;U_1|Q)+I(U_2,U_1;V_2|Q)+I(V_2;Y_2|U_2,U_1,Q)-I(V_2;S|Q),\\
\label{eq_rate_constraint_22} R_{20} &\leq& I(U_2;U_1|Q)+I(U_2,U_1;V_2|Q)+I(U_2;Y_2|V_2,U_1,Q)-I(U_2;S|Q),\\
\label{eq_rate_constraint_23} R_{20}+R_{22} &\leq& I(U_2;U_1|Q)+I(U_2,U_1;V_2|Q)+I(U_2,V_2;Y_2|U_1,Q)-I(U_2;S|Q)-I(V_2;S|Q),\\
\label{eq_rate_constraint_24} R_{22}+R_{10} &\leq& I(U_2;U_1|Q)+I(U_2,U_1;V_2|Q)+I(V_2,U_1;Y_2|U_2,Q)-I(V_2;S|Q)-I(U_1;S|Q),\\
\label{eq_rate_constraint_25} R_{20}+R_{10} &\leq& I(U_2;U_1|Q)+I(U_2,U_1;V_2|Q)+I(U_2,U_1;Y_2|V_2,Q)-I(U_2;S|Q)-I(U_1;S|Q),\\
\label{eq_rate_constraint_26} R_{20}+R_{22}+R_{10} &\leq& I(U_2;U_1|Q)+I(U_2,U_1;V_2|Q)+I(U_2,V_2,U_1;Y_2|Q)-I(U_2;S|Q)-I(V_2;S|Q)-I(U_1;S|Q).
\end{eqnarray}}
Then for any $(R_{10},R_{11},R_{20},R_{22}) \in \mathcal{R}_1$, the rate pair $(R_{10}+R_{11},R_{20}+R_{22})$ is achievable for the DM interference channel with state information non-causally known at both transmitters.
\end{Theorem}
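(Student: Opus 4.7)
The plan is to prove achievability via a random-coding argument that combines Han--Kobayashi rate splitting with Gel'fand--Pinsker multi-coding at both encoders. First, I would split each message as $M_i=(M_{i0},M_{ii})$ at rates $(R_{i0},R_{ii})$, designating $M_{i0}$ as the common part (to be decoded at both receivers) and $M_{ii}$ as the private part. Fixing the given distribution $p(q)p(u_1|q,s)p(v_1|q,s)p(u_2|q,s)p(v_2|q,s)$, I would generate a time-sharing sequence $Q^n$ i.i.d.\ from $p(q)$ and, conditionally on $Q^n$, four mutually independent codebooks: $U_1^n(m_{10},l_{10})$ drawn i.i.d.\ from $p(u_1|q)$ for $m_{10}\in\{1,\ldots,2^{nR_{10}}\}$ and $l_{10}\in\{1,\ldots,2^{n\tilde R_{10}}\}$, and analogous codebooks $V_1^n(m_{11},l_{11})$, $U_2^n(m_{20},l_{20})$, $V_2^n(m_{22},l_{22})$ with bin rates $\tilde R_{11},\tilde R_{20},\tilde R_{22}$ to be specified later.

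Next, each encoder would perform simultaneous Gel'fand--Pinsker binning across its two sub-messages: given $(m_{10},m_{11},s^n)$, encoder~1 searches for bin indices $(l_{10},l_{11})$ such that $(Q^n,S^n,U_1^n(m_{10},l_{10}),V_1^n(m_{11},l_{11}))$ is jointly typical, and encoder~2 proceeds analogously. Because $U_1\perp V_1\mid Q,S$ under the fixed distribution, the multivariate covering lemma ensures success with probability approaching one whenever $\tilde R_{10}\ge I(U_1;S|Q)$ and $\tilde R_{11}\ge I(V_1;S|Q)$, together with the dual conditions at encoder~2. Each decoder then performs joint typical-set decoding over its own two sub-messages plus the other user's common sub-message: decoder~1 looks for a unique tuple $(\hat m_{10},\hat l_{10},\hat m_{11},\hat l_{11},\hat m_{20},\hat l_{20})$ making the corresponding codewords jointly typical with $Y_1^n$, and decoder~2 is symmetric.

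The decoding error at receiver~1 then partitions into six events, indexed by the nonempty subsets $B\subseteq\{U_1,V_1,U_2\}$ of codewords whose message indices are incorrect, restricted to those $B$ for which at least one of $m_{10},m_{11}$ is wrong. For each such $B$, the packing lemma yields an inequality of the form $\sum_{X\in B}(R_X+\tilde R_X)\le I(X_B;X_{B^c},Y_1|Q)+\Delta(B|Q)$, where $\Delta(B|Q)=\sum_{X\in B}H(X|Q)-H(X_B|Q)$ measures the correlation that the correct codewords inherit through the shared state $S^n$ but that incorrect codewords (drawn independently across the two encoders) lack. The recurring prefix $I(U_1;U_2|Q)+I(U_1,U_2;V_1|Q)$ visible in each of (\ref{eq_rate_constraint_11})--(\ref{eq_rate_constraint_16}) is precisely $\Delta(\{U_1,V_1,U_2\}|Q)$, and an analogous prefix arises in the decoder~2 constraints.

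Finally, combining the encoder lower bounds on $\tilde R_{10},\tilde R_{11},\tilde R_{20},\tilde R_{22}$ with the six packing-lemma upper bounds at each receiver and eliminating the auxiliary rates by Fourier--Motzkin elimination reduces the system to the twelve constraints (\ref{eq_rate_constraint_11})--(\ref{eq_rate_constraint_26}); the rate pair $(R_{10}+R_{11},R_{20}+R_{22})$ is then achievable by construction. I expect the Fourier--Motzkin step to be the main technical obstacle: each listed inequality must be reproduced by pairing one (or several) decoder constraints with the appropriate encoder constraint, and one has to check that every further inequality generated during the elimination is either vacuous or implied by those already appearing in $\mathcal R_1$.
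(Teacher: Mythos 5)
Your overall architecture is the right one — Han–Kobayashi rate splitting with the common/private decomposition, simultaneous Gel'fand–Pinsker multicoding at each encoder (the separate covering conditions $\tilde R_{10}\ge I(U_1;S|Q)$, $\tilde R_{11}\ge I(V_1;S|Q)$ indeed decouple since $U_1$ and $V_1$ are conditionally independent given $(Q,S)$), joint-typicality decoding of the two own sub-messages plus the interfering public sub-message, and elimination of the binning rates. The six surviving decoding constraints per receiver, indexed by the nonempty subsets $B\subseteq\{U_1,V_1,U_2\}$ with at least one own sub-message wrong, are also the ones the paper keeps after discarding the dominated events. So the skeleton matches.

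The gap is in the packing-lemma correction term. You posit
\[
\sum_{X\in B}(R_X+\tilde R_X)\ \le\ I(X_B;X_{B^c},Y_1|Q)+\Delta(B|Q),\qquad
\Delta(B|Q)=\sum_{X\in B}H(X|Q)-H(X_B|Q),
\]
a $B$-dependent correction that vanishes for singletons. The paper's bound, however, carries the same fixed prefix $I(U_1;U_2|Q)+I(U_1,U_2;V_1|Q)=\Delta(\{U_1,V_1,U_2\}\,|\,Q)$ in all six constraints. One can check that the two expressions agree whenever $|B|\ge 2$, i.e.\ $\Delta(B|Q)+I(X_B;X_{B^c},Y_1|Q)=\Delta(\{U_1,V_1,U_2\}|Q)+I(X_B;Y_1|X_{B^c},Q)$, but they differ by exactly $\Delta(B^c|Q)$ when $B$ is a singleton: for $B=\{V_1\}$ the paper allows an extra $I(U_1;U_2|Q)$ on the right of \eqref{eq_rate_constraint_11}, and for $B=\{U_1\}$ an extra $I(U_2;V_1|Q)$ on the right of \eqref{eq_rate_constraint_12}. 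These are missing under your formula, so following your plan through literally yields a strictly smaller region than $\mathcal{R}_1$ and does not establish the theorem as stated. The source of the discrepancy is that the paper computes the probability of the offending joint-typicality event by factoring the entire tuple of codewords as $p(u_1^n|q^n)p(u_2^n|q^n)p(v_1^n|q^n)p(y_1^n|X_{B^c}^n,q^n)$ — treating even the correct codewords as mutually conditionally independent given $q^n$, which they are in the codebook generation — while your $\Delta(B|Q)$ only accounts for the independence among the incorrect codewords. You actually notice that the recurring prefix equals $\Delta(\{U_1,V_1,U_2\}\,|\,Q)$, but that observation contradicts your stated $B$-dependent formula, and the tension is never resolved. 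To reproduce the theorem you must make the correction term $\Delta(\{U_1,V_1,U_2\}\,|\,Q)$ in every constraint (equivalently, use $I(X_B;Y_1|X_{B^c},Q)$ rather than $I(X_B;X_{B^c},Y_1|Q)$ as the channel term), and justify the joint pmf factorization explicitly. A minor further note: the paper does not run a full Fourier–Motzkin elimination inside the proof of Theorem 1 — it simply substitutes the minimal admissible binning rates $R'_{10}=I(U_1;S|Q)$, $R'_{11}=I(V_1;S|Q)$, $R'_{20}=I(U_2;S|Q)$ into the surviving decoding constraints; FME is mentioned only as a way to turn the implicit region $\mathcal R_1$ into an explicit one on $(R_1,R_2)$.
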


\begin{proof}
In the achievable coding scheme for Theorem \ref{theorem_1}, the message at the $j$th transmitter is splitted into two parts: the public message $m_{j0}$ and the private message $m_{jj}$. Subsequently, the $j$th decoder tries to decode the corresponding messages from the intending transmitter and the public message of the interfering transmitter. Furthermore, Gel'fand-Pinsker coding is utilized to help both transmitters send the messages with the non-causal knowledge of the state information. Here we presume that the message pairs are chosen uniformly on the message sets for both transmitters.

Codebook generation: Fix the probability distribution $p(q)p(u_1|q,s)p(v_1|q,s)p(u_2|q,s)p(v_2|q,s)$. Also define the following function for the $j$th user that maps $\mathcal{U}_j\mathcal{\times V}_j\mathcal{\times S}$ to $\mathcal{X}_j$:
\[
x_{ji}=F_j(u_{ji},v_{ji},s_i),
\]
where $i$ is the element index of each sequence.

Generate the time-sharing sequence $q^n \sim \prod_{i=1}^{n} p_{Q}(q_i)$. For the $j$th user, $u_{j}^{n}(m_{j0},l_{j0})$ is randomly and conditionally independently generated according to $\prod_{i=1}^{n}p_{U_j|Q}(u_{ji}|q_i)$, for $m_{j0} \in \{1,2,\cdots,2^{nR_{j0}}\}$ and $l_{j0} \in \{1,2,\cdots,2^{nR_{j0}'}\}$. Similarly, $v_{j}^{n}(m_{jj},l_{jj})$ is randomly and conditionally independently generated according to $\prod_{i=1}^{n}p_{V_j|Q}(v_{ji}|q_i)$, for $m_{jj} \in \{1,2,\cdots,2^{nR_{jj}}\}$ and $l_{jj} \in \{1,2,\cdots,2^{nR_{jj}'}\}$.

Encoding: To send the message $m_j=(m_{j0},m_{jj})$, the $j$th encoder first tries to find the pair $(l_{j0},l_{jj})$ such that the following joint typicality holds: $(q^n,u_j^n(m_{j0},l_{j0}),s^n) \in T_\epsilon^{(n)}$ and $(q^n,v_j^n(m_{jj},l_{jj}),s^n) \in T_\epsilon^{(n)}$. If successful, $(q^n,u_j^n(m_{j0},l_{j0}),v_j^n(m_{jj},l_{jj}),s^n)$ is also jointly typical with high probability, and the $j$th encoder sends $x_j$ where the $i$th element is $x_{ji} = F_j(u_{ji}(m_{j0},l_{j0}),v_{ji}(m_{jj},l_{jj}),s_{i})$. If not, the $j$th encoder transmits $x_j$ where the $i$th element is $x_{ji} = F_j(u_{ji}(m_{j0},1),v_{ji}(m_{jj},1),s_{i})$.

Decoding: Decoder $1$ finds the unique message pair $(\hat{m}_{10},\hat{m}_{11})$ such that $(q^n,u_1^n(\hat{m}_{10},\hat{l}_{10}),u_2^n(\hat{m}_{20},\hat{l}_{20}),\\v_{1}^{n}(\hat{m}_{11},\hat{l}_{11}),y_1^n) \in T_\epsilon^{(n)}$ for some $\hat{l}_{10} \in \{1,2,\cdots,2^{nR_{10}'}\}$, $\hat{m}_{20} \in \{1,2,\cdots,2^{nR_{20}}\}$, $\hat{l}_{20}\in\{1,2,\cdots,2^{nR_{20}'}\}$, and $\hat{l}_{11}\in \{1,2,\cdots,2^{nR_{11}'}\}$. If no such unique pair exists, the decoder declares an error. Decoder $2$ determines the unique message pair $(\hat{m}_{20},\hat{m}_{22})$ in a similar way.

Analysis of probability of error: Here the probability of error is the same for each message pair since the transmitted message pair is chosen with a uniform distribution on the message set. Without loss of generality, we assume $(1,1)$ for user $1$ and $(1,1)$ for user $2$ are sent over the channel. First we consider the encoding error probability at transmitter $1$. Define the following error events:
\[
\xi_1 = \left\{\left(q^n,u_1^n\left(1,l_{10}\right),s^n\right) \notin T_\epsilon^{(n)} \textrm{ for all } l_{10} \in \{1,2,\cdots,2^{nR_{10}'}\}\right\},
\]
\[
\xi_2 = \left\{\left(q^n,v_1^n\left(1,l_{11}\right),s^n\right) \notin T_\epsilon^{(n)} \textrm{ for all } l_{11} \in \{1,2,\cdots,2^{nR_{11}'}\}\right\}.
\]

The probability of the error event $\xi_1$ can be bounded as follows:
\begin{eqnarray*}
P(\xi_1) &=& \prod_{l_{10}=1}^{2^{nR_{10}'}} \left(1-P\left(\left\{\left(q^n,u_1^n\left(1,l_{10}\right),s^n\right) \in T_\epsilon^{(n)}\right\}\right)\right)\\
&\leq& \left(1-2^{-n\left(I(U_1;S|Q)+\delta_1(\epsilon)\right)}\right)^{2^{nR_{10}'}}\\
&\leq& e^{-2^{n\left(R_{10}'-I(U_1;S|Q)+\delta_1(\epsilon)\right)}},
\end{eqnarray*}
where $\delta_1(\epsilon)\to 0$ as $\epsilon \to 0$. Therefore, the probability of $\xi_1$ goes to $0$ as $n\to\infty$ if
\begin{equation}\label{eq_prob_xi1}
R_{10}' \geq I(U_1;S|Q).
\end{equation}
Similarly, the probability of $\xi_2$ can also be upper bounded by an arbitrarily small number as $n\to\infty$ if
\begin{equation}\label{eq_prob_xi2}
R_{11}' \geq I(V_1;S|Q).
\end{equation}

The encoding error probability at transmitter $1$ can be calculated as:
\[
P_{\textrm{enc}1} = P\left(\xi_1 \cup \xi_2\right) \leq P(\xi_1) + P(\xi_2),
\]
which goes to $0$ as $n\to \infty$ if \eqref{eq_prob_xi1} and \eqref{eq_prob_xi2} are satisfied.

Now we consider the error analysis at the decoder $1$. Denote the right Gel'fand-Pinsker coding indices chosen by the encoders as $(L_{10},L_{11})$ and $(L_{20},L_{22})$. Define the following error events:
{\small
\begin{eqnarray*}
\xi_{31} &=&  \left\{\left(q^n,u_1^n\left(1,L_{10}\right),u_2^n\left(1,L_{20}\right),v_1^n\left(m_{11},l_{11}\right),y_1^n\right) \in T_\epsilon^{(n)} \textrm{ for } m_{11} \neq 1, \textrm{ and some } l_{11}\right\},\\
\xi_{32} &=&  \left\{\left(q^n,u_1^n\left(1,L_{10}\right),u_2^n\left(1,l_{20}\right),v_1^n\left(m_{11},l_{11}\right),y_1^n\right) \in T_\epsilon^{(n)} \textrm{ for } m_{11} \neq 1, \textrm{ and some } l_{11},l_{20}\neq L_{20}\right\},\\
\xi_{33} &=&  \left\{\left(q^n,u_1^n\left(1,l_{10}\right),u_2^n\left(1,L_{20}\right),v_1^n\left(m_{11},l_{11}\right),y_1^n\right) \in T_\epsilon^{(n)} \textrm{ for } m_{11} \neq 1, \textrm{ and some } l_{11},l_{10}\neq L_{10}\right\},\\
\xi_{34} &=&  \left\{\left(q^n,u_1^n\left(1,l_{10}\right),u_2^n\left(1,l_{20}\right),v_1^n\left(m_{11},l_{11}\right),y_1^n\right) \in T_\epsilon^{(n)} \textrm{ for } m_{11} \neq 1, \textrm{ and some } l_{11},l_{10}\neq L_{10},l_{20}\neq L_{20}\right\},\\
\xi_{41} &=& \left\{\left(q^n,u_1^n\left(m_{10},l_{10}\right),u_2^n\left(1,L_{20}\right),v_1^n\left(1,L_{11}\right),y_1^n\right) \in T_\epsilon^{(n)} \textrm{ for } m_{10} \neq 1, \textrm{ and some } l_{10}\right\},\\
\xi_{42} &=& \left\{\left(q^n,u_1^n\left(m_{10},l_{10}\right),u_2^n\left(1,l_{20}\right),v_1^n\left(1,L_{11}\right),y_1^n\right) \in T_\epsilon^{(n)} \textrm{ for } m_{10} \neq 1, \textrm{ and some } l_{10},l_{20}\neq L_{20}\right\},\\
\xi_{43} &=& \left\{\left(q^n,u_1^n\left(m_{10},l_{10}\right),u_2^n\left(1,L_{20}\right),v_1^n\left(1,l_{11}\right),y_1^n\right) \in T_\epsilon^{(n)} \textrm{ for } m_{10} \neq 1, \textrm{ and some } l_{10},l_{11}\neq L_{11}\right\},\\
\xi_{44} &=& \left\{\left(q^n,u_1^n\left(m_{10},l_{10}\right),u_2^n\left(1,l_{20}\right),v_1^n\left(1,l_{11}\right),y_1^n\right) \in T_\epsilon^{(n)} \textrm{ for } m_{10} \neq 1, \textrm{ and some } l_{10},l_{20}\neq L_{20},l_{11}\neq L_{11}\right\},\\
\xi_{51} &=& \left\{\left(q^n,u_1^n\left(m_{10},l_{10}\right),u_2^n\left(1,L_{20}\right),v_1^n\left(m_{11},l_{11}\right),y_1^n\right) \in T_\epsilon^{(n)} \textrm{ for } m_{10} \neq 1,\ m_{11} \neq 1, \textrm{ and some } l_{10},l_{11}\right\},\\
\xi_{52} &=& \left\{\left(q^n,u_1^n\left(m_{10},l_{10}\right),u_2^n\left(1,l_{20}\right),v_1^n\left(m_{11},l_{11}\right),y_1^n\right) \in T_\epsilon^{(n)} \textrm{ for } m_{10} \neq 1,\ m_{11} \neq 1, \textrm{ and some } l_{10},l_{11},l_{20}\neq L_{20}\right\},\\
\xi_{61} &=& \left\{\left(q^n,u_1^n\left(1,L_{10}\right),u_2^n\left(m_{20},l_{20}\right),v_1^n\left(m_{11},l_{11}\right),y_1^n\right) \in T_\epsilon^{(n)} \textrm{ for } m_{20} \neq 1,\ m_{11} \neq 1, \textrm{ and some } l_{20},l_{11}\right\},\\
\xi_{62} &=& \left\{\left(q^n,u_1^n\left(1,l_{10}\right),u_2^n\left(m_{20},l_{20}\right),v_1^n\left(m_{11},l_{11}\right),y_1^n\right) \in T_\epsilon^{(n)} \textrm{ for } m_{20} \neq 1,\ m_{11} \neq 1, \textrm{ and some } l_{20},l_{11},l_{10}\neq L_{10}\right\},\\
\xi_{71} &=& \left\{\left(q^n,u_1^n\left(m_{10},l_{10}\right),u_2^n\left(m_{20},l_{20}\right),v_1^n\left(1,L_{11}\right),y_1^n\right) \in T_\epsilon^{(n)} \textrm{ for } m_{10} \neq 1,\ m_{20} \neq 1, \textrm{ and some } l_{10},l_{20}\right\},\\
\xi_{72} &=& \left\{\left(q^n,u_1^n\left(m_{10},l_{10}\right),u_2^n\left(m_{20},l_{20}\right),v_1^n\left(1,l_{11}\right),y_1^n\right) \in T_\epsilon^{(n)} \textrm{ for } m_{10} \neq 1,\ m_{20} \neq 1, \textrm{ and some } l_{10},l_{20},l_{11}\neq L_{11}\right\},\\
\xi_8 &=& \big\{\left(q^n,u_1^n\left(m_{10},l_{10}\right),u_2^n\left(m_{20},l_{20}\right),v_1^n\left(m_{11},l_{11}\right),y_1^n\right) \in T_\epsilon^{(n)} \textrm{ for } m_{10} \neq 1,\ m_{20} \neq 1,\ m_{11}\neq 1, \\ && \textrm{ and some } l_{10},l_{20},l_{11}\big\}.
\end{eqnarray*}
}
The probability of $\xi_{31}$ can be bounded as follows:
\begin{eqnarray*}
P(\xi_{31}) &=& \sum_{m_{11}=2}^{2^{nR_{11}}}\ \sum_{l_{11}=1}^{2^{R_{11}'}} P\left(\{\left(q^n,u_1^n\left(1,L_{10}\right),u_2^n\left(1,L_{20}\right),v_1^n\left(m_{11},l_{11}\right),y_1^n\right) \in T_\epsilon^{(n)}\}\right)\\
&\leq& 2^{n\left(R_{11}+R_{11}'\right)} \sum_{(q^n,u_1^n,u_2^n,v_1^n,y_1^n)\in T_\epsilon^{(n)}} p(q^n)p(u_1^n|q^n)p(u_2^n|q^n)p(v_1^n|q^n)p(y_1^n|u_1^n,u_2^n,q^n)\\
&\leq& 2^{n\left(R_{11}+R_{11}'\right)} 2^{-n\left(H(Q)+H(U_1|Q)+H(U_2|Q)+H(V_1|Q)+H(Y_1|U_1,U_2,Q)-H(Q,U_1,U_2,V_1,Y_1)-\delta_2(\epsilon)\right)}\\
&\leq& 2^{n\left(R_{11}+R_{11}'\right)} 2^{-n\left(I(U_1;U_2|Q)+I(U_1,U_2;V_1|Q)+I(V_1;Y_1|U_1,U_2,Q)-\delta_2(\epsilon)\right)},
\end{eqnarray*}
where $\delta_2(\epsilon)\to 0$ as $\epsilon \to 0$. Obviously, the probability that $\xi_{31}$ happens goes to $0$ if
\begin{equation}\label{eq_prob_xi31}
R_{11}+R_{11}' \leq I(U_1;U_2|Q)+I(U_1,U_2;V_1|Q)+I(V_1;Y_1|U_1,U_2,Q).
\end{equation}
Similarly, the error probability corresponding to the left error events goes to $0$, respectively, if
\begin{eqnarray}
\label{eq_prob_xi32} R_{11}+R_{11}'+R_{20}' &\leq& I(U_1;U_2|Q)+I(U_1,U_2;V_1|Q)+I(V_1,U_2;Y_1|U_1,Q),\\
\label{eq_prob_xi33} R_{11}+R_{10}'+R_{11}' &\leq& I(U_1;U_2|Q)+I(U_1,U_2;V_1|Q)+I(U_1,V_1;Y_1|U_2,Q),\\
\label{eq_prob_xi34} R_{11}+R_{10}'+R_{11}'+R_{20}' &\leq& I(U_1;U_2|Q)+I(U_1,U_2;V_1|Q)+I(U_1,V_1,U_2;Y_1|Q),\\
\label{eq_prob_xi41} R_{10}+R_{10}' &\leq& I(U_1;U_2|Q)+I(U_1,U_2;V_1|Q)+I(U_1;Y_1|V_1,U_2,Q),\\
\label{eq_prob_xi42} R_{10}+R_{10}'+R_{20}' &\leq& I(U_1;U_2|Q)+I(U_1,U_2;V_1|Q)+I(U_1,U_2;Y_1|V_1,Q),\\
\label{eq_prob_xi43} R_{10}+R_{10}'+R_{11}' &\leq& I(U_1;U_2|Q)+I(U_1,U_2;V_1|Q)+I(U_1,V_1;Y_1|U_2,Q),\\
\label{eq_prob_xi44} R_{10}+R_{10}'+R_{11}'+R_{20}' &\leq& I(U_1;U_2|Q)+I(U_1,U_2;V_1|Q)+I(U_1,V_1,U_2;Y_1|Q),\\
\label{eq_prob_xi51} R_{10}+R_{11}+R_{10}'+R_{11}' &\leq& I(U_1;U_2|Q)+I(U_1,U_2;V_1|Q)+I(U_1,V_1;Y_1|U_2,Q), \\
\label{eq_prob_xi52} R_{10}+R_{11}+R_{10}'+R_{11}'+R_{20}' &\leq& I(U_1;U_2|Q)+I(U_1,U_2;V_1|Q)+I(U_1,V_1,U_2;Y_1|Q), \\
\label{eq_prob_xi61} R_{11}+R_{20}+R_{11}'+R_{20}' &\leq& I(U_1;U_2|Q)+I(U_1,U_2;V_1|Q)+I(V_1,U_2;Y_1|U_1,Q), \\
\label{eq_prob_xi62} R_{11}+R_{20}+R_{10}'+R_{11}'+R_{20}' &\leq& I(U_1;U_2|Q)+I(U_1,U_2;V_1|Q)+I(U_1,V_1,U_2;Y_1|Q), \\
\label{eq_prob_xi71} R_{10}+R_{20}+R_{10}'+R_{20}' &\leq& I(U_1;U_2|Q)+I(U_1,U_2;V_1|Q)+I(U_1,U_2;Y_1|V_1,Q), \\
\label{eq_prob_xi72} R_{10}+R_{20}+R_{10}'+R_{11}'+R_{20}' &\leq& I(U_1;U_2|Q)+I(U_1,U_2;V_1|Q)+I(U_1,V_1,U_2;Y_1|Q), \\
\label{eq_prob_xi8} R_{10}+R_{11}+R_{20}+R_{10}'+R_{11}'+R_{20}' &\leq& I(U_1;U_2|Q)+I(U_1,U_2;V_1|Q)+I(U_1,V_1,U_2;Y_1|Q).
\end{eqnarray}
Note that there are some redundant inequalities in \eqref{eq_prob_xi31}-\eqref{eq_prob_xi8}: \eqref{eq_prob_xi32} is implied by \eqref{eq_prob_xi61}; \eqref{eq_prob_xi33} is implied by \eqref{eq_prob_xi51}; \eqref{eq_prob_xi42} is implied by \eqref{eq_prob_xi71}; \eqref{eq_prob_xi43} is implied by \eqref{eq_prob_xi51}; \eqref{eq_prob_xi34}, \eqref{eq_prob_xi44}, \eqref{eq_prob_xi52}, \eqref{eq_prob_xi62}, and \eqref{eq_prob_xi72} are implied by \eqref{eq_prob_xi8}. By combining with the error analysis at the encoder, we can recast the rate constraints \eqref{eq_prob_xi31}-\eqref{eq_prob_xi8} as:
\begin{eqnarray*}
R_{11} &\leq& I(U_1;U_2|Q)+I(U_1,U_2;V_1|Q)+I(V_1;Y_1|U_1,U_2,Q)-I(V_1;S|Q),\\
R_{10} &\leq& I(U_1;U_2|Q)+I(U_1,U_2;V_1|Q)+I(U_1;Y_1|V_1,U_2,Q)-I(U_1;S|Q),\\
R_{10}+R_{11} &\leq& I(U_1;U_2|Q)+I(U_1,U_2;V_1|Q)+I(U_1,V_1;Y_1|U_2,Q)-I(U_1;S|Q)-I(V_1;S|Q),\\
R_{11}+R_{20} &\leq& I(U_1;U_2|Q)+I(U_1,U_2;V_1|Q)+I(V_1,U_2;Y_1|U_1,Q)-I(V_1;S|Q)-I(U_2;S|Q),\\
R_{10}+R_{20} &\leq& I(U_1;U_2|Q)+I(U_1,U_2;V_1|Q)+I(U_1,U_2;Y_1|V_1,Q)-I(U_1;S|Q)-I(U_2;S|Q),\\
R_{10}+R_{11}+R_{20} &\leq& I(U_1;U_2|Q)+I(U_1,U_2;V_1|Q)+I(U_1,V_1,U_2;Y_1|Q)-I(U_1;S|Q)-I(V_1;S|Q)-I(U_2;S|Q).
\end{eqnarray*}

The error analysis for transmitter $2$ and decoder $2$ is similar to user $1$ and is omitted here. Correspondingly, \eqref{eq_rate_constraint_21} to \eqref{eq_rate_constraint_26} show the rate constraints for user $2$.
In addition, the right hand sides of the inequalities \eqref{eq_rate_constraint_11} to \eqref{eq_rate_constraint_26} are guaranteed to be non-negative when choosing the probability distribution. As long as \eqref{eq_rate_constraint_11} to \eqref{eq_rate_constraint_26} are satisfied, the probability of error can be bounded by the sum of the error probability at the encoders and the decoders, which goes to $0$ as $n\to\infty$.
\end{proof}

An explicit description of the achievable rate region can be obtained by applying Fourier-Motzkin algorithm on our implicit description \eqref{eq_rate_constraint_11}-\eqref{eq_rate_constraint_26}. We omit it here due to its high complexity and the space limitation.
\subsection{Superposition Encoding}\label{sec_3_2}
We now present another coding scheme, which applies superposition encoding for the sub-messages. The achievable rate region is given in the following theorem.
\begin{Theorem}\label{theorem_2}
For a fixed probability distribution $p(q)p(u_1|s,q)p(v_1|u_1,s,q)p(u_2|s,q)p(v_2|u_2,s,q)$, let $\mathcal{R}_2$ be the set of all non-negative rate tuple $(R_{10},R_{11},R_{20},R_{22})$ satisfying
\begin{eqnarray}
\label{eq_rate_constraint_2_11} R_{11} &\leq& I(U_1,V_1;U_2|Q)+I(V_1;Y_1|U_1,U_2,Q)-I(V_1;S|U_1,Q),\\
\label{eq_rate_constraint_2_12} R_{10}+R_{11} &\leq& I(U_1,V_1;U_2|Q)+I(U_1,V_1;Y_1|U_2,Q)-I(U_1,V_1;S|Q),\\
\label{eq_rate_constraint_2_13} R_{11}+R_{20} &\leq& I(U_1,V_1;U_2|Q)+I(V_1,U_2;Y_1|U_1,Q)-I(V_1;S|U_1,Q)-I(U_2;S|Q),\\
\label{eq_rate_constraint_2_14} R_{10}+R_{11}+R_{20} &\leq& I(U_1,V_1;U_2|Q)+I(U_1,V_1,U_2;Y_1|Q)-I(U_1,V_1;S|Q)-I(U_2;S|Q),\\
\label{eq_rate_constraint_2_21} R_{22} &\leq& I(U_2,V_2;U_1|Q)+I(V_2;Y_2|U_2,U_1,Q)-I(V_2;S|U_2,Q),\\
\label{eq_rate_constraint_2_22} R_{20}+R_{22} &\leq& I(U_2,V_2;U_1|Q)+I(U_2,V_2;Y_2|U_1,Q)-I(U_2,V_2;S|Q),\\
\label{eq_rate_constraint_2_23} R_{22}+R_{10} &\leq& I(U_2,V_2;U_1|Q)+I(V_2,U_1;Y_2|U_2,Q)-I(V_2;S|U_2,Q)-I(U_1;S|Q),\\
\label{eq_rate_constraint_2_24} R_{20}+R_{22}+R_{10} &\leq& I(U_2,V_2;U_1|Q)+I(U_2,V_2,U_1;Y_2|Q)-I(U_2,V_2;S|Q)-I(U_1;S|Q).
\end{eqnarray}
Then for any $(R_{10},R_{11},R_{20},R_{22}) \in  \mathcal{R}_2$, the rate pair $(R_{10}+R_{11},R_{20}+R_{22})$ is achievable for the DM interference channel defined in Section \ref{sec_2}.
\end{Theorem}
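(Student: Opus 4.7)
The plan is to mimic the three-step argument --- codebook generation, encoding via rate splitting plus Gel'fand--Pinsker binning, and joint typicality decoding --- used in the proof of Theorem~\ref{theorem_1}, but replace the simultaneous-encoding structure by a superposition structure in which, for each user $j$, the private-layer codeword $V_j^n$ is generated conditionally on the public-layer codeword $U_j^n$, and the binning against $s^n$ is carried out in two nested stages. The coupling induced by superposition means every public-layer decoding error automatically forces a fresh private-layer codeword at the same time, which collapses several of the error events that appeared in \eqref{eq_prob_xi31}--\eqref{eq_prob_xi8} and leaves a much shorter list of constraints.

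More concretely, after fixing the distribution $p(q)p(u_1|s,q)p(v_1|u_1,s,q)p(u_2|s,q)p(v_2|u_2,s,q)$, I would generate $q^n\sim\prod_i p_Q(q_i)$ once; for each index tuple, $u_j^n(m_{j0},l_{j0})$ is drawn conditionally i.i.d.\ given $q^n$ from $p_{U_j|Q}$ and then $v_j^n(m_{j0},l_{j0},m_{jj},l_{jj})$ is drawn conditionally i.i.d.\ given $(q^n,u_j^n(m_{j0},l_{j0}))$ from $p_{V_j|U_j,Q}$. Given message $(m_{j0},m_{jj})$ and state $s^n$, the encoder performs nested binning: first locate $l_{j0}$ with $(q^n,u_j^n(m_{j0},l_{j0}),s^n)\in T_\epsilon^{(n)}$, then locate $l_{jj}$ with $(q^n,u_j^n,v_j^n(m_{j0},l_{j0},m_{jj},l_{jj}),s^n)\in T_\epsilon^{(n)}$, and finally transmit $x_{ji}=F_j(u_{ji},v_{ji},s_i)$. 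Repeating the covering-lemma bound used for $\xi_1,\xi_2$ in Theorem~\ref{theorem_1} --- first unconditional on $U_j^n$ and then conditioned on it --- delivers the nested binning rate constraints $R_{j0}'\geq I(U_j;S|Q)$ and $R_{jj}'\geq I(V_j;S|U_j,Q)$, whose sum is $I(U_j,V_j;S|Q)$ by the chain rule.

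Decoder~$1$ then looks for the unique $(\hat m_{10},\hat m_{11})$ for which some $(\hat l_{10},\hat l_{11},\hat m_{20},\hat l_{20})$ makes $(q^n,u_1^n(\hat m_{10},\hat l_{10}),u_2^n(\hat m_{20},\hat l_{20}),v_1^n(\hat m_{10},\hat l_{10},\hat m_{11},\hat l_{11}),y_1^n)$ jointly typical. The error events split according to which of $(m_{10},m_{11},m_{20})$ are decoded incorrectly; because $V_1^n$ is indexed by $(m_{10},l_{10})$ as well as $(m_{11},l_{11})$, every deviation in $(m_{10},l_{10})$ automatically refreshes $v_1^n$ jointly with $u_1^n$, so the only non-redundant cases are (i) $m_{11}$ alone wrong, (ii) $(m_{10},m_{11})$ jointly wrong, (iii) $(m_{11},m_{20})$ jointly wrong, and (iv) all three wrong. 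Applying the packing-lemma computation of \eqref{eq_prob_xi31}--\eqref{eq_prob_xi8} to these four cases, with $(U_j^n,V_j^n)$ handled as a superimposed pair distributed as $p(u_j|q)p(v_j|u_j,q)$, yields
\begin{align*}
R_{11}+R_{11}' &\leq I(U_1,V_1;U_2|Q)+I(V_1;Y_1|U_1,U_2,Q),\\
R_{10}+R_{10}'+R_{11}+R_{11}' &\leq I(U_1,V_1;U_2|Q)+I(U_1,V_1;Y_1|U_2,Q),\\
R_{11}+R_{11}'+R_{20}+R_{20}' &\leq I(U_1,V_1;U_2|Q)+I(V_1,U_2;Y_1|U_1,Q),\\
R_{10}+R_{10}'+R_{11}+R_{11}'+R_{20}+R_{20}' &\leq I(U_1,V_1;U_2|Q)+I(U_1,V_1,U_2;Y_1|Q).
\end{align*}
Substituting the binning rates above eliminates the primed variables and produces exactly \eqref{eq_rate_constraint_2_11}--\eqref{eq_rate_constraint_2_14}; the symmetric analysis at decoder~$2$ delivers \eqref{eq_rate_constraint_2_21}--\eqref{eq_rate_constraint_2_24}.

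\emph{Main obstacle.} The main bookkeeping step is to justify that the ``mixed'' sub-cases in which $\hat m_{10}=1$ but $\hat l_{10}\neq L_{10}$ contribute to the joint $(R_{10},R_{11})$ bound rather than producing a stand-alone constraint on $R_{10}$ (as happened in Theorem~\ref{theorem_1}); this is what reduces the twelve Han--Kobayashi-type inequalities of Theorem~\ref{theorem_1} down to the eight inequalities \eqref{eq_rate_constraint_2_11}--\eqref{eq_rate_constraint_2_24}. The other delicate point is the nested covering lemma for $v_j^n$, which must be applied conditionally on a $u_j^n$ that is already jointly typical with $s^n$; this follows from the conditional typicality lemma applied to $p(v_j|u_j,s,q)$, but needs to be stated carefully so that the second-stage exponent is $I(V_j;S|U_j,Q)$ rather than the unconditional $I(V_j;S|Q)$.
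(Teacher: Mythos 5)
Your proposal is correct and follows essentially the same route as the paper: nested Gel'fand--Pinsker binning with exponents $I(U_j;S|Q)$ and $I(V_j;S|U_j,Q)$, superposition codebooks $v_j^n$ indexed jointly by the public and private indices, joint-typicality decoding, and the observation that a wrong public index refreshes the private codeword so that the $\xi_4$- and $\xi_7$-type events yield the same mutual-information right-hand sides as the $\xi_5$- and $\xi_8$-type events and are therefore dominated. The only cosmetic difference is that the paper writes out all fourteen error events and lists the implications explicitly, whereas you argue directly that four cases (wrong $m_{11}$; wrong $m_{10},m_{11}$; wrong $m_{11},m_{20}$; all wrong) are the non-redundant ones --- same content, slightly more compressed bookkeeping.
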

\begin{proof}
Compared with the first coding scheme, the rate splitting structure is also applied in the achievable scheme of Theorem \ref{theorem_2}. The main difference here is that instead of simultaneous encoding, now the private message $m_{jj}$ is superimposed on the public message $m_{j0}$ for the $j$th transmitter. Gel'fand-Pinsker coding is also utilized to help the transmitters send both public and private messages.

Codebook generation: Fix the probability distribution $p(q)p(u_1|s,q)p(v_1|u_1,s,q)p(u_2|s,q)p(v_2|u_2,s,q)$. First generate the time-sharing sequence $q^n\sim\prod_{i=1}^{n} p_{Q}(q_i)$. For the $j$th user, $u_j^{n}(m_{j0},l_{j0})$ is randomly and conditionally independently generated according to $\prod_{i=1}^{n}p_{U_j|Q}(u_{ji}|q_i)$, for $m_{j0} \in \{1,2,\cdots,2^{nR_{j0}}\}$ and $l_{j0}\in \{1,2,\cdots,2^{nR_{j0}'}\}$. For each $u_j^n(m_{j0},l_{j0})$, $v_{j}^{n}(m_{j0},l_{j0},m_{jj},l_{jj})$ is randomly and conditionally independently generated according to $\prod_{i=1}^{n}p_{V_j|U_j,Q}(v_{ji}|u_{ji},q_{i})$, for $m_{jj} \in \{1,2,\cdots,2^{nR_{jj}}\}$ and $l_{jj} \in \{1,2,\cdots,2^{nR_{jj}'}\}$.

Encoding: To send the message $m_j=(m_{j0},m_{jj})$, the $j$th encoder first tries to find $l_{j0}$ such that $(q^n,u_j^n(m_{j0},l_{j0}),s^n) \in T_\epsilon^{(n)}$ holds. Then for this specific $l_{j0}$, find $l_{jj}$ such that $(q^n,u_j^n(m_{j0},l_{j0}),v_j^n(m_{j0},l_{j0},m_{jj},l_{jj}),s^n) \in T_\epsilon^{(n)}$ holds. If successful, the $j$th encoder sends $v_{j}^{n}(m_{j0},l_{j0},m_{jj},l_{jj})$. If not, the $j$th encoder transmits $v_{j}^{n}(m_{j0},1,m_{jj},1)$.

Decoding: Decoder $1$ finds the unique message pair $(\hat{m}_{10},\hat{m}_{11})$ such that $(q^n,u_1^n(\hat{m}_{10},\hat{l}_{10}),u_2^n(\hat{m}_{20},\hat{l}_{20}),\\ v_1^n(\hat{m}_{10},\hat{l}_{10},\hat{m}_{11},\hat{l}_{11}),y_1^n) \in T_{\epsilon}^{(n)}$ for some $\hat{l}_{10} \in \{1,2,\cdots,2^{nR_{10}'}\}$, $\hat{m}_{20} \in \{1,2,\cdots,2^{nR_{20}}\}$,$\hat{l}_{20} \in \{1,2,\cdots,2^{nR_{20}'}\}$, and $\hat{l}_{11} \in \{1,2,\cdots,2^{nR_{11}'}\}$. If no such unique pair exists, the decoder declares an error. Decoder $2$ determines the unique message pair $(\hat{m}_{20},\hat{m}_{22})$ similarly.

Analysis of probability of error: Similar to the proof in Theorem \ref{theorem_1}, we assume message $(1,1)$ and $(1,1)$ are sent for both transmitters. First we consider the encoding error probability at transmitter $1$. Define the following error events:
\[
\xi_1' = \left\{\left(q^n,u_1^n\left(1,l_{10}\right),s^n\right) \notin T_\epsilon^{(n)} \textrm{ for all } l_{10} \in \{1,2,\cdots,2^{nR_{10}'}\}\right\},
\]
\[
\xi_2' = \left\{\left(q^n,u_1^n(m_{10},l_{10}),v_1^n\left(1,l_{10},1,l_{11}\right),s^n\right) \notin T_\epsilon^{(n)} \textrm{ for all } l_{11} \in \{1,2,\cdots,2^{nR_{11}'}\} \textrm{ and previously found typical } l_{10}\big|\bar{\xi}_1'\right\}.
\]

The probability of the error event $\xi_1'$ can be bounded as follows:
\begin{eqnarray*}
P(\xi_1') &=& \prod_{l_{10}=1}^{2^{nR_{10}'}} \left(1-P\left(\left\{\left(q^n,u_1^n\left(1,l_{10}\right),s^n\right) \in T_\epsilon^{(n)}\right\}\right)\right)\\
&\leq& \left(1-2^{-n\left(I(U_1;S|Q)+\delta_1'(\epsilon)\right)}\right)^{2^{nR_{10}'}}\\
&\leq& e^{-2^{n\left(R_{10}'-I(U_1;S|Q)+\delta_1'(\epsilon)\right)}},
\end{eqnarray*}
where $\delta_1'(\epsilon)\to 0$ as $\epsilon \to 0$. Therefore, the probability of $\xi_1'$ goes to $0$ as $n\to\infty$ if
\begin{equation}\label{eq_prob_hatxi1}
R_{10}' \geq I(U_1;S|Q).
\end{equation}
Similarly, for the previously found typical $l_{10}$, the probability of $\xi_2'$ can be upper bounded as follows:
\begin{eqnarray*}
P(\xi_2') &=& \prod_{l_{11}=1}^{2^{nR_{11}'}} \left(1-P\left(\left\{\left(q^n,u_1^n\left(1,l_{10}\right),v_1^n\left(1,l_{10},1,l_{11}\right),s^n\right) \in T_\epsilon^{(n)}\right\}\right)\right)\\
&\leq& \left(1-2^{n\left(H(Q,U_1,V_1,S)-H(Q,U_1,S)-H(V_1|U_1,Q)-\delta_2'(\epsilon)\right)}\right)^{2^{nR_{11}'}}\\
&\leq& \left(1-2^{-n\left(I(V_1;S|U_1,Q)+\delta_2'(\epsilon)\right)}\right)^{2^{nR_{11}'}}\\
&\leq& e^{-2^{n\left(R_{11}'-I(V_1;S|U_1,Q)+\delta_2'(\epsilon)\right)}},
\end{eqnarray*}
where $\delta_2'(\epsilon)\to 0$ as $\epsilon \to 0$. Therefore, the probability of $\xi_2'$ goes to $0$ as $n\to\infty$ if
\begin{equation}\label{eq_prob_hatxi2}
R_{11}' \geq I(V_1;S|U_1,Q).
\end{equation}

The encoding error probability at transmitter $1$ can be calculated as:
\[
P_{\textrm{enc}1} = P(\xi_1') + P(\xi_2'),
\]
which goes to $0$ as $n\to \infty$ if \eqref{eq_prob_hatxi1} and \eqref{eq_prob_hatxi2} are satisfied.

Now we consider the error analysis at the decoder $1$. Denote the right Gel'fand-Pinsker coding indices chosen by the encoders as $(L_{10},L_{11})$ and $(L_{20},L_{22})$. Define the following error events:
{\small
\begin{eqnarray*}
\xi_{31}' &=&  \left\{\left(q^n,u_1^n\left(1,L_{10}\right),u_2^n\left(1,L_{20}\right),v_1^n\left(1,L_{10},m_{11},l_{11}\right),y_1^n\right) \in T_\epsilon^{(n)} \textrm{ for } m_{11} \neq 1, \textrm{ and some } l_{11}\right\},\\
\xi_{32}' &=&  \left\{\left(q^n,u_1^n\left(1,L_{10}\right),u_2^n\left(1,l_{20}\right),v_1^n\left(1,L_{10},m_{11},l_{11}\right),y_1^n\right) \in T_\epsilon^{(n)} \textrm{ for } m_{11} \neq 1, \textrm{ and some } l_{11},l_{20}\neq L_{20}\right\},\\
\xi_{33}' &=&  \left\{\left(q^n,u_1^n\left(1,l_{10}\right),u_2^n\left(1,L_{20}\right),v_1^n\left(1,l_{10},m_{11},l_{11}\right),y_1^n\right) \in T_\epsilon^{(n)} \textrm{ for } m_{11} \neq 1, \textrm{ and some } l_{11},l_{10}\neq L_{10}\right\},\\
\xi_{34}' &=&  \left\{\left(q^n,u_1^n\left(1,l_{10}\right),u_2^n\left(1,l_{20}\right),v_1^n\left(1,l_{10},m_{11},l_{11}\right),y_1^n\right) \in T_\epsilon^{(n)} \textrm{ for } m_{11} \neq 1, \textrm{ and some } l_{11},l_{10}\neq L_{10},l_{20}\neq L_{20}\right\},\\
\xi_{41}' &=& \left\{\left(q^n,u_1^n\left(m_{10},l_{10}\right),u_2^n\left(1,L_{20}\right),v_1^n\left(m_{10},l_{10},1,L_{11}\right),y_1^n\right) \in T_\epsilon^{(n)} \textrm{ for } m_{10} \neq 1, \textrm{ and some } l_{10}\right\},\\
\xi_{42}' &=& \left\{\left(q^n,u_1^n\left(m_{10},l_{10}\right),u_2^n\left(1,l_{20}\right),v_1^n\left(m_{10},l_{10},1,L_{11}\right),y_1^n\right) \in T_\epsilon^{(n)} \textrm{ for } m_{10} \neq 1, \textrm{ and some } l_{10},l_{20}\neq L_{20}\right\},\\
\xi_{43}' &=& \left\{\left(q^n,u_1^n\left(m_{10},l_{10}\right),u_2^n\left(1,L_{20}\right),v_1^n\left(m_{10},l_{10},1,l_{11}\right),y_1^n\right) \in T_\epsilon^{(n)} \textrm{ for } m_{10} \neq 1, \textrm{ and some } l_{10},l_{11}\neq L_{11}\right\},\\
\xi_{44}' &=& \left\{\left(q^n,u_1^n\left(m_{10},l_{10}\right),u_2^n\left(1,l_{20}\right),v_1^n\left(m_{10},l_{10},1,l_{11}\right),y_1^n\right) \in T_\epsilon^{(n)} \textrm{ for } m_{10} \neq 1, \textrm{ and some } l_{10},l_{20}\neq L_{20},l_{11}\neq L_{11}\right\},\\
\xi_{51}' &=& \left\{\left(q^n,u_1^n\left(m_{10},l_{10}\right),u_2^n\left(1,L_{20}\right),v_1^n\left(m_{10},l_{10},m_{11},l_{11}\right),y_1^n\right) \in T_\epsilon^{(n)} \textrm{ for } m_{10} \neq 1,\ m_{11} \neq 1, \textrm{ and some } l_{10},l_{11}\right\},\\
\xi_{52}' &=& \left\{\left(q^n,u_1^n\left(m_{10},l_{10}\right),u_2^n\left(1,l_{20}\right),v_1^n\left(m_{10},l_{10},m_{11},l_{11}\right),y_1^n\right) \in T_\epsilon^{(n)} \textrm{ for } m_{10} \neq 1,\ m_{11} \neq 1, \textrm{ and some } l_{10},l_{11},l_{20}\neq L_{20}\right\},\\
\xi_{61}' &=& \left\{\left(q^n,u_1^n\left(1,L_{10}\right),u_2^n\left(m_{20},l_{20}\right),v_1^n\left(1,L_{10},m_{11},l_{11}\right),y_1^n\right) \in T_\epsilon^{(n)} \textrm{ for } m_{20} \neq 1,\ m_{11} \neq 1, \textrm{ and some } l_{20},l_{11}\right\},\\
\xi_{62}' &=& \left\{\left(q^n,u_1^n\left(1,l_{10}\right),u_2^n\left(m_{20},l_{20}\right),v_1^n\left(1,l_{10},m_{11},l_{11}\right),y_1^n\right) \in T_\epsilon^{(n)} \textrm{ for } m_{20} \neq 1,\ m_{11} \neq 1, \textrm{ and some } l_{20},l_{11},l_{10}\neq L_{10}\right\},\\
\xi_{71}' &=& \left\{\left(q^n,u_1^n\left(m_{10},l_{10}\right),u_2^n\left(m_{20},l_{20}\right),v_1^n\left(m_{10},l_{10},1,L_{11}\right),y_1^n\right) \in T_\epsilon^{(n)} \textrm{ for } m_{10} \neq 1,\ m_{20} \neq 1, \textrm{ and some } l_{10},l_{20}\right\},\\
\xi_{72}' &=& \left\{\left(q^n,u_1^n\left(m_{10},l_{10}\right),u_2^n\left(m_{20},l_{20}\right),v_1^n\left(m_{10},l_{10},1,l_{11}\right),y_1^n\right) \in T_\epsilon^{(n)} \textrm{ for } m_{10} \neq 1,\ m_{20} \neq 1, \textrm{ and some } l_{10},l_{20},l_{11}\neq L_{11}\right\},\\
\xi_8' &=& \big\{\left(q^n,u_1^n\left(m_{10},l_{10}\right),u_2^n\left(m_{20},l_{20}\right),v_1^n\left(m_{10},l_{10},m_{11},l_{11}\right),y_1^n\right) \in T_\epsilon^{(n)} \textrm{ for } m_{10} \neq 1,\ m_{20} \neq 1,\ m_{11}\neq 1, \\ && \textrm{ and some } l_{10},l_{20},l_{11}\big\}.
\end{eqnarray*}
}
The probability of $\xi_{31}'$ can be bounded as follows:
\begin{eqnarray*}
P(\xi_{31}') &=& \sum_{m_{11}=2}^{2^{nR_{11}}}\ \sum_{l_{11}=1}^{2^{R_{11}'}} P\left(\{\left(q^n,u_1^n\left(1,L_{10}\right),u_2^n\left(1,L_{20}\right),v_1^n\left(1,L_{10},m_{11},l_{11}\right),y_1^n\right) \in T_\epsilon^{(n)}\}\right)\\
&\leq& 2^{n\left(R_{11}+R_{11}'\right)} \sum_{(q^n,u_1^n,u_2^n,v_1^n,y_1^n)\in T_\epsilon^{(n)}} p(q^n)p(u_1^n|q^n)p(u_2^n|q^n)p(v_1^n|u_1^n,q^n)p(y_1^n|u_1^n,u_2^n,q^n)\\
&\leq& 2^{n\left(R_{11}+R_{11}'\right)} 2^{-n\left(H(Q,U_1,V_1)+H(U_2|Q)+H(Y_1|U_1,U_2,Q)-H(Q,U_1,U_2,V_1,Y_1)-\delta_3'(\epsilon)\right)}\\
&\leq& 2^{n\left(R_{11}+R_{11}'\right)} 2^{-n\left(I(U_1,V_1;U_2|Q)+I(V_1;Y_1|U_1,U_2,Q)-\delta_3'(\epsilon)\right)},
\end{eqnarray*}
where $\delta_3'(\epsilon)\to 0$ as $\epsilon \to 0$. Obviously, the probability that $\xi_{31}'$ happens goes to $0$ if
\begin{equation}\label{eq_prob_hatxi31}
R_{11}+R_{11}' \leq I(U_1,V_1;U_2|Q)+I(V_1;Y_1|U_1,U_2,Q).
\end{equation}
Similarly, the error probability corresponding to the left error events goes to $0$, respectively, if
\begin{eqnarray}
\label{eq_prob_hatxi32} R_{11}+R_{11}'+R_{20}' &\leq& I(U_1,V_1;U_2|Q)+I(V_1,U_2;Y_1|U_1,Q),\\
\label{eq_prob_hatxi33} R_{11}+R_{10}'+R_{11}' &\leq& I(U_1,V_1;U_2|Q)+I(U_1,V_1;Y_1|U_2,Q),\\
\label{eq_prob_hatxi34} R_{11}+R_{10}'+R_{11}'+R_{20}' &\leq& I(U_1,V_1;U_2|Q)+I(U_1,V_1,U_2;Y_1|Q),\\
\label{eq_prob_hatxi41} R_{10}+R_{10}' &\leq& I(U_1,V_1;U_2|Q)+I(U_1,V_1;Y_1|U_2,Q),\\
\label{eq_prob_hatxi42} R_{10}+R_{10}'+R_{20}' &\leq& I(U_1,V_1;U_2|Q)+I(U_1,V_1,U_2;Y_1|Q),\\
\label{eq_prob_hatxi43} R_{10}+R_{10}'+R_{11}' &\leq& I(U_1,V_1;U_2|Q)+I(U_1,V_1;Y_1|U_2,Q),\\
\label{eq_prob_hatxi44} R_{10}+R_{10}'+R_{11}'+R_{20}' &\leq& I(U_1,V_1;U_2|Q)+I(U_1,V_1,U_2;Y_1|Q),\\
\label{eq_prob_hatxi51} R_{10}+R_{11}+R_{10}'+R_{11}' &\leq& I(U_1,V_1;U_2|Q)+I(U_1,V_1;Y_1|U_2,Q), \\
\label{eq_prob_hatxi52} R_{10}+R_{11}+R_{10}'+R_{11}'+R_{20}' &\leq& I(U_1,V_1;U_2|Q)+I(U_1,V_1,U_2;Y_1|Q), \\
\label{eq_prob_hatxi61} R_{11}+R_{20}+R_{11}'+R_{20}' &\leq& I(U_1,V_1;U_2|Q)+I(V_1,U_2;Y_1|U_1,Q), \\
\label{eq_prob_hatxi62} R_{11}+R_{20}+R_{10}'+R_{11}'+R_{20}' &\leq& I(U_1,V_1;U_2|Q)+I(U_1,V_1,U_2;Y_1|Q), \\
\label{eq_prob_hatxi71} R_{10}+R_{20}+R_{10}'+R_{20}' &\leq& I(U_1,V_1;U_2|Q)+I(U_1,V_1,U_2;Y_1|Q), \\
\label{eq_prob_hatxi72} R_{10}+R_{20}+R_{10}'+R_{11}'+R_{20}' &\leq& I(U_1,V_1;U_2|Q)+I(U_1,V_1,U_2;Y_1|Q), \\
\label{eq_prob_hatxi8} R_{10}+R_{11}+R_{20}+R_{10}'+R_{11}'+R_{20}' &\leq& I(U_1,V_1;U_2|Q)+I(U_1,V_1,U_2;Y_1|Q).
\end{eqnarray}
Note that there are some redundant inequalities in \eqref{eq_prob_hatxi31}-\eqref{eq_prob_hatxi8}: \eqref{eq_prob_hatxi32} is implied by \eqref{eq_prob_hatxi61}; \eqref{eq_prob_hatxi33} is implied by \eqref{eq_prob_hatxi51}; \eqref{eq_prob_hatxi41} is implied by \eqref{eq_prob_hatxi43}; \eqref{eq_prob_hatxi42} is implied by \eqref{eq_prob_hatxi71}; \eqref{eq_prob_hatxi43} is implied by \eqref{eq_prob_hatxi51}; \eqref{eq_prob_hatxi34}, \eqref{eq_prob_hatxi44}, \eqref{eq_prob_hatxi52}, \eqref{eq_prob_hatxi62}, \eqref{eq_prob_hatxi71}, and \eqref{eq_prob_hatxi72} are implied by \eqref{eq_prob_hatxi8}. By combining with the error analysis at the encoder, we can recast the rate constraints \eqref{eq_prob_hatxi31}-\eqref{eq_prob_hatxi8} as:
\begin{eqnarray*}
R_{11} &\leq& I(U_1,V_1;U_2|Q)+I(V_1;Y_1|U_1,U_2,Q)-I(V_1;S|U_1,Q),\\
R_{10}+R_{11} &\leq& I(U_1,V_1;U_2|Q)+I(U_1,V_1;Y_1|U_2,Q)-I(U_1,V_1;S|Q),\\
R_{11}+R_{20} &\leq& I(U_1,V_1;U_2|Q)+I(V_1,U_2;Y_1|U_1,Q)-I(V_1;S|U_1,Q)-I(U_2;S|Q),\\
R_{10}+R_{11}+R_{20} &\leq& I(U_1,V_1;U_2|Q)+I(U_1,V_1,U_2;Y_1|Q)-I(U_1,V_1;S|Q)-I(U_2;S|Q).
\end{eqnarray*}

The error analysis for transmitter $2$ and decoder $2$ is similar to user $1$ and is omitted here. Correspondingly, \eqref{eq_rate_constraint_2_21} to \eqref{eq_rate_constraint_2_24} show the rate constraints for user $2$. Furthermore, the right hand sides of the inequalities \eqref{eq_rate_constraint_2_11} to \eqref{eq_rate_constraint_2_24} are guaranteed to be non-negative when choosing the probability distribution. As long as \eqref{eq_rate_constraint_2_11} to \eqref{eq_rate_constraint_2_24} are satisfied, the probability of error can be bounded by the sum of the error probability at the encoders and the decoders, which goes to $0$ as $n\to\infty$.
\end{proof}
\begin{Remark}
The achievable regions in the above theorems are being further studied in several special cases by only deploying Gel'fand-Pinsker coding for the public message or only for the private message at the transmitters. In addition, the application of special coding schemes to the strong (or weak) state-dependent IC is also under investigation.
\end{Remark}
\begin{Remark}
It can be easily seen that the achievable rate region $\mathcal{R}_1$ in Theorem \ref{theorem_1} is a subset of $\mathcal{R}_2$, i.e., $\mathcal{R}_1 \subseteq \mathcal{R}_2$. However, whether these two regions are equivalent is still under investigation.
\end{Remark}
%
\section{Conclusion}\label{sec_5}
We considered the interference channel with state information non-causally known at both transmitters. Two achievable rate regions are established based on two coding schemes with simultaneous encoding and superposition encoding, respectively. 

\newpage




\begin{thebibliography}{1}
\bibitem{shannon}
C. E. Shannon, ``Two way communication channels," in \emph{Proc. 4th Berkeley Symp. on Mathematical Statistics and Probability}, vol. 1, pp. 611-644, Berkeley, CA, 1961.

\bibitem{ahlswede}
R. Ahlswede, ``The capacity region of a channel with two senders and two receivers," \emph{The Annals of Probability}, vol. 2, no. 5, pp. 805-814, Oct. 1974.

\bibitem{carleial}
A. B. Carleial, ``Interference channels," \emph{IEEE Trans. Inf. Theory}, vol. 24, no. 1, pp. 60-70, Jan. 1978.


\bibitem{han_kobayashi}
T. S. Han and K. Kobayashi, ``A new achievable rate region for the interference channel," \emph{IEEE Trans. Inf. Theory}, vol. 27, no. 1, pp. 49-60, Jan. 1981.


\bibitem{cmg_region}
H. F. Chong, M. Motani, H. K. Garg, and H. El Gamal, ``On the Han-Kobayashi region for the interference channel," \emph{IEEE Trans. Inf. Theory}, vol. 54, no. 7, pp. 3188-3195, Jul. 2008.

\bibitem{new_fourier}
K. Kobayashi and T. S. Han, ``A further consideration on the HK and the CMG regions for the interference channel," in \emph{Proc. of ITA}, San Diego, CA, Jan. 2007.

\bibitem{carleial_2}
A. B. Carleial, ``A case where interference does not reduce capacity," \emph{IEEE Trans. Inf. Theory}, vol. 21, no. 5, pp. 569-570, Sep. 1975.

\bibitem{sato}
H. Sato, ``The capacity of the Gaussian interference channel under strong interference," \emph{IEEE Trans. Inf. Theory}, vol. 27, no. 6, pp. 786-788, Nov. 1981.

\bibitem{jinhua}
J. Jiang, Y. Xin, and H. K. Garg, ``Discrete memoryless interference channels with feedback", in \emph{Proc. 41st Annual Conference on Information Science and Systems (CISS)}, Baltimore, MD, Mar 2007.

\bibitem{caoyi}
Y. Cao and B. Chen, ``An achievable rate region for interference channels with conferencing", in \emph{Proc. IEEE Int. Symp. Information Theory (ISIT)}, pp. 1251-1255, Nice, Jun. 2007.

\bibitem{info_embedding}
M. D. Swanson, M. Kobayashi, and A. H. Tewfik, ``Multimedia data-embedding and watermarking technologies," in \emph{Proceeding of the IEEE}, vol. 86, no. 6, pp. 1064-1087, Jun. 1998.

\bibitem{gamal}
C. Heegard and A. El Gamal, ``On the capacities of computer memories with defects," \emph{IEEE Trans. Inf. Theory}, vol. 29, no. 5, pp. 731-739, Sep. 1983.

\bibitem{shannon_2}
C. E. Shannon, ``Channels with side information at the transmitter," \emph{IBM J. Res. Develop.}, vol. 2, no. 4, pp. 289-293, Oct. 1958.

\bibitem{gelfand}
S. I. Gel'fand and M. S. Pinsker, ``Coding for channel with random parameters," \emph{Probl. Control Inf. Theory}, vol. 9, no. 1, pp. 19-31, 1980.

\bibitem{costa_dpc}
M. Costa, ``Writing on dirty paper," \emph{IEEE Trans. Inf. Theory}, vol. 29, no. 3, pp. 439-441, May 1983.

\bibitem{gelfand_2}
S. I. Gel'fand and M. S. Pinsker, ``On Gaussian channels with random parameters," in \emph{Proc. IEEE Int. Symp. Information Theory (ISIT)}, pp. 247-250, Tashkent, Sep. 1984.

\bibitem{dpc_bc}
Y. Steinberg, ``Coding for the degraded broadcast channel with random parameters, with causal and noncausal side information," \emph{IEEE Trans. Inf. Theory}, vol. 51, no. 8, pp. 2867-2877, Aug. 2005.

\bibitem{kim}
Y. H. Kim, A. Sutivong, and S. Sigurj$\acute{\textrm{o}}$nsson, ``Multiple user writing on dirty paper," in \emph{Proc. IEEE Int. Symp. Information Theory (ISIT)}, pp. 534, Chicago, IL, USA, June/July 2004.


\end{thebibliography}
\end{document}